\documentclass[a4paper,11pt]{article}
\usepackage[T1]{fontenc}
\usepackage[utf8]{inputenc}
\usepackage{lmodern}
\usepackage[margin=2.5cm]{geometry}
\usepackage{amsmath,amssymb,amsthm,mathtools}
\usepackage{enumitem}
\usepackage{xurl}
\usepackage{algorithm}
\usepackage{algpseudocode}
\usepackage[hidelinks]{hyperref}

\newcommand{\WPC}{\operatorname{WPC}}
\newcommand{\FWPC}{\overline{\operatorname{WPC}}}

\newcommand{\poly}{\operatorname{poly}}
\newcommand{\cP}{\mathcal{P}}
\newcommand{\cE}{\mathcal{E}}
\theoremstyle{plain}

\newtheorem{theorem}{Theorem}[section]
\newtheorem{lemma}[theorem]{Lemma}
\newtheorem{proposition}[theorem]{Proposition}
\newtheorem{corollary}[theorem]{Corollary}
\theoremstyle{remark}

\numberwithin{equation}{section}
\title{The Complexity of Weak Partition Connectivity in Hedgegraphs}
\author{\small Yuanhao Wang$\,^{\rm a}$\quad\quad Wei Wang$\,^{\rm a}$\footnote{Corresponding author. Email address: wang\_weiw@163.com}\\
\small $^{\rm a}\,$School of Mathematics and Statistics, Xi'an Jiaotong University, Xi'an, 710049, P. R. China}
\date{}

\begin{document}
\maketitle

\begin{abstract}
We prove that the integer-threshold decision problem for weak partition connectivity in hedgegraphs is NP-complete, answering an open question about its computational complexity. Hardness holds even for connected unweighted hedgegraphs in which every hedge consists of exactly two nonempty, vertex-disjoint hyperedges whose union is the entire vertex set. On the same class of instances, hedge connectivity has a simple exact formula. Using a binary matrix representation, we express fractional weak partition connectivity as $m-\rho(A)$, where $\rho(A)$ maximizes the ratio of the number of selected rows to one less than the number of distinct projected columns. This formula yields both the hardness reduction and deterministic algorithms: exact computation when some reference column gives row supports satisfying a linear intersection condition, including the case of minimum row-support number $s(A)\le2$, and a partition-output polynomial-time approximation scheme (PTAS) for both the integer and fractional objectives on all full-support split systems. Unless $\mathrm{P}=\mathrm{NP}$, neither objective admits a fully polynomial-time approximation scheme (FPTAS) on this class.
\end{abstract}

\noindent\textbf{Keywords:} hedgegraphs; weak partition connectivity; NP-completeness; approximation schemes; binary matrices; maximum density.

\section{Introduction}\label{sec:introduction}
\subsection{Background}\label{subsec:background}

A hedgegraph groups edges or hyperedges into \emph{hedges}; deleting a hedge removes all its constituent hyperedges. This model captures the simultaneous failure of edges of the same type. For example, each color class in an edge-colored graph can be viewed as a hedge. Such models arise in network survivability with shared-risk resources and in labeled connectivity problems~\cite{r9,r24}. Hedge connectivity asks for the minimum number of hedges whose deletion disconnects the underlying hypergraph. Ghaffari, Karger, and Panigrahi~\cite{r15} gave randomized approximation algorithms and a quasi-polynomial-time exact algorithm for this problem. Subsequent work investigated structural restrictions and parameterized algorithms. Randomized polynomial-time algorithms are known when the span, that is, the number of component hyperedges in each hedge, is bounded by a constant~\cite{r8,r11}. General Hedge Cut admits a randomized fixed-parameter algorithm~\cite{r10}, and quasi-polynomial lower bounds are known under the Exponential Time Hypothesis (ETH)~\cite{r18,r23}. These results provide the complexity background for hedge connectivity.

Chandrasekaran et al.~\cite{r5} developed a polymatroidal approach to hedgegraphs and studied two partition-based connectivity measures. They gave a polynomial-time algorithm for partition connectivity and asked whether weak partition connectivity (WPC) can be computed in polynomial time. We answer this question negatively unless $\mathrm{P}=\mathrm{NP}$, even on a class where hedge connectivity itself is easy to compute. More recently, Chandrasekaran, Chekuri, and Zhu~\cite{r6} studied the minimum nonempty quotient of a polymatroid given by a value oracle, placing random contraction and hedge-cut results in a common framework. Their work on multiple objectives~\cite{r7} gives randomized approximation algorithms and quasi-polynomial-time algorithms for hypergraph cuts and discusses extensions to hedge cuts. These works share with ours the setting of hedgegraphs and the use of contraction.

In binary feature selection, rows represent features and columns represent data points; distinct projected columns after retaining a set of rows correspond to different data patterns. Bandyapadhyay et al.~\cite{r1} studied feature selection for categorical data clustering. Their independent-set reduction in the zero-error-budget case is closely related to the incidence-matrix encoding used in our hardness proof. Our reduction adds XOR reward rows and polynomial multiplicities to control the maximum density over all row selections. This additional control is needed because WPC optimizes a ratio over all possible projection-type counts. Projection classes appear in Partial VC Dimension~\cite{r2} as well, but that problem maximizes the number of distinct traces for a prescribed selection size, whereas our density objective balances the number of selected rows against the number of projection types. Characterizing connectivity through vertex partitions is closely related to the edge-disjoint spanning-tree packing theorems of Nash-Williams and Tutte~\cite{r20,r22} and their hypergraph extensions~\cite{r12,r13}. Weak partition connectivity in ordinary hypergraphs also appears in list decoding and tree assignments~\cite{r17,r26}, with hypergraph orientations playing a role in these connections. Zhang and Yang~\cite{r25} further study excess obstructions and certificates for these assignments. These results provide structural background for our partition viewpoint. Maximum-weight closure~\cite{r21}, maximum-density subgraphs~\cite{r16}, and parametric maximum flow~\cite{r14} provide standard optimization tools for our exact algorithm.

\subsection{Main results}\label{subsec:main-results}

We write $\lambda(G)$ for hedge connectivity, $\WPC(G)$ for integer weak partition connectivity, and $\FWPC(G)$ for its unrounded version. Precise definitions are given in Section~\ref{sec:definitions}. Our main hardness result is the following.

\begin{theorem}\label{thm:main}
Given an explicitly represented unweighted hedgegraph $G$ and an integer $K$, deciding whether $\WPC(G)\le K$ is NP-complete. This remains true when $G$ is connected and every hedge consists of exactly two nonempty, vertex-disjoint hyperedges whose union is $V(G)$.
\end{theorem}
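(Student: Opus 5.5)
The plan has two parts: membership in NP, and a reduction that goes through the binary-matrix formula announced in the abstract.

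For membership, I would use the definition of $\WPC(G)$ from Section~\ref{sec:definitions} as a minimum, over vertex partitions $\cP$ with $|\cP|\ge 2$, of a ratio. The numerator counts how much the hedges cross $\cP$, that is, $|\cP|$ minus the number of classes left after contracting each hedge's hyperedges. The denominator is $|\cP|-1$, and the ratio is rounded down. A certificate is such a partition. The crossing count for every hedge is computed in polynomial time by union--find on the parts, so $\WPC(G)\le K$ is verified directly and the problem lies in NP.

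For hardness, I restrict to the stated class: each hedge $h_i$ consists of two disjoint hyperedges $E_i^0\cup E_i^1=V$. I encode such a hedgegraph as an $m\times n$ binary matrix $A$ with $A_{iv}=1$ iff $v\in E_i^1$. Contracting the two hyperedges of $h_i$ leaves at most two classes. Hence a partition $\cP$ loses $|\cP|-1$ from hedge $i$ when row $i$ is constant on $\cP$, and otherwise loses $|\cP|-2$ (one class per bit value).

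Optimizing over $\cP$ for a fixed set $S$ of rows that are ``not fully merged'' then gives the expected identity
\[
\FWPC = m-\rho(A),\qquad \rho(A)=\max_{S}\frac{|S|}{d_S(A)-1},
\]
where $d_S(A)$ is the number of distinct columns of $A$ projected onto $S$. Two facts drive this. Refining $\cP$ to the classes of projected columns is optimal. Rows outside $S$ contribute their full $|\cP|-1$, and rows in $S$ contribute one less. I would prove this identity first, since the abstract indicates the paper establishes it, and I would also check that rounding down does not break thresholds. For the integer version I would keep the optimum ratio bounded away from integers, or compare $\WPC$ with $\lfloor \FWPC\rfloor$ directly.

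The reduction would be from Independent Set. Given a graph $H$ and an integer $k$, I take rows indexed by the edges of $H$ and columns indexed by its vertices, using incidence-type entries, following Bandyapadhyay et al. Selecting a set of columns whose projections stay distinct then corresponds to a vertex set avoiding edge conflicts. To force the density maximum to reflect $\alpha(H)$, I add two ingredients:
\begin{enumerate}[label=(\roman*)]
\item XOR ``reward'' rows, which increase $|S|$ without creating new projection types exactly when the selected structure is consistent;
\item polynomial multiplicities (copies of rows), which make the reward dominate lower-order terms.
\end{enumerate}
The target is $\rho(A)\ge t(k)$ iff $\alpha(H)\ge k$, for an explicit rational threshold $t$. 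Translating through $m-\rho$ gives $K$.

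I would then make sure every row has both a $0$ and a $1$, so that both hyperedges are nonempty, and add a connecting construction, for example one reference column, so that $G$ is connected. Unweightedness is obtained by replacing multiplicities with duplicated hedges.

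The main obstacle is the soundness direction. I must show that no row selection $S$, including ones mixing reward rows and incidence rows arbitrarily, beats the density threshold unless $H$ has a size-$k$ independent set. I would attack this by classifying $S$ by the set of vertex-columns it separates and bounding $|S|$ in terms of $d_S-1$ for each class. Choosing the multiplicities as large polynomials makes the gap between ``independent'' and ``conflicting'' configurations exceed all error terms.
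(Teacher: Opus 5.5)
Your NP-membership argument and your sketch of $\FWPC(G_A)=m-\rho(A)$ match the paper. One slip: your first description of the per-hedge term is reversed. A row constant within every block contributes $|\cP|-2$, not $|\cP|-1$; your later sentence has it right.

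The genuine gap is the reduction. It is the entire hardness content of the theorem, and you never specify it. The part you do specify is also oriented the wrong way. $\rho(A)$ maximizes over \emph{row} sets, and nothing in the objective selects columns, so the independent set has to be read off from the chosen rows. That means rows should be indexed by the vertices of $H$, with the edges of $H$ appearing as columns.

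Consider your encoding instead: rows indexed by edges, columns by vertices, incidence entries, and an all-zero reference column. Then every normalized support has size at most two, so $s(A)\le 2$. In that case $\rho(A)$ is a maximum-density-subgraph quantity, computable in polynomial time by Corollary~\ref{cor:support-two}. Your base encoding therefore carries no hardness. Everything would have to come from XOR rows you never define, and the soundness argument you correctly flag as the main obstacle is never supplied.

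The missing idea is the paper's type-counting construction.
\begin{enumerate}
\item Columns and base rows. Use columns $z$, $a_v$ for $v\in V(H)$, and $a_{uv}$ for $uv\in E(H)$. Let the base row $x_v$ be one exactly at $a_v$ and at the edge markers incident with $v$. One copy of $x_v$ for each $v\in S$ then yields exactly $|S|+|E(H[S])|+1$ projected types, so each induced edge costs one extra type.
\item Reward rows. For each nonedge $uv$ add two copies of $y_{uv}=x_u\oplus x_v$. Such a row creates no new type if $u,v\in S$, and otherwise separates $a_u$ from $z$.
\item Multiplicities. Take $W=B+1$ copies of each base row, where $B$ is the number of reward rows.
\end{enumerate}
The density bounds then follow.
\begin{enumerate}
\item An independent $S$ together with its internal reward rows has density $(W|S|+2\binom{|S|}{2})/|S|=W+|S|-1$.
\item A selection with $S\ne\varnothing$ that has an edge inside $S$, or a reward row leaving $S$, has at least $|S|+1$ nonzero types. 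Its density is at most $(W|S|+B)/(|S|+1)<W$.
\item Reward-only selections have density at most $B<W$.
\end{enumerate}
Hence $\rho(A_H)=W+\alpha(H)-1$.

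Doubling the reward rows makes $\rho$ an integer that rises by exactly one per unit of $\alpha(H)$. This is what lets the integer threshold $K=M-W-k+1$ work directly, with $M=nW+B$ the number of rows, and it gives $\WPC=\FWPC$ on these instances. Your suggestion of keeping $\rho$ ``away from integers'' is not the relevant condition. What you need is an integer separating the yes-values of $\rho$ from the no-values.
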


We call a system in which every hedge consists of two nonempty, disjoint component hyperedges covering the entire vertex set a \emph{full-support split system}. If it has $m$ hedges and the most frequent split occurs $\mu$ times, then $\lambda(G)=m-\mu$. Thus, hedge connectivity can be computed exactly in polynomial time on this class, whereas exact computation of WPC is NP-hard. Section~\ref{sec:hardness} proves these claims.

Both the hardness proof and the algorithms use a binary matrix $A\in\{0,1\}^{m\times N}$ whose nonconstant rows represent splits and whose columns represent vertices. If $c_A(R)$ denotes the number of distinct columns after projection onto the row set $R$, define
\[
\rho(A)=\max_{\varnothing\ne R\subseteq[m]}\frac{|R|}{c_A(R)-1}.
\]
The matrix density formula proved in Section~\ref{sec:definitions} is
\begin{equation}\label{eq:intro-density}
\FWPC(G_A)=m-\rho(A),\qquad \WPC(G_A)=m-\lceil\rho(A)\rceil.
\end{equation}
It turns optimization over vertex partitions into a row-selection problem. Building on this correspondence, Section~\ref{sec:hardness} constructs a matrix from a graph $H$ and proves the exact identity $\rho(A_H)=W+\alpha(H)-1$, where $W$ is the multiplicity of each base row in the construction and $\alpha(H)$ is the maximum independent-set size of $H$. This representation also yields two complementary algorithmic results: a deterministic polynomial-time exact algorithm for both WPC objectives and a common optimal partition when some reference column gives normalized row supports satisfying the linear intersection condition, and a deterministic PTAS for the entire class of full-support split systems. Unless $\mathrm{P}=\mathrm{NP}$, neither objective admits a deterministic FPTAS.

\subsection{Organization}\label{subsec:organization}

Section~\ref{sec:definitions} introduces hedgegraphs and weak partition connectivity, develops the matrix representation of full-support split systems, and proves the matrix density formula. Section~\ref{sec:hardness} proves the hardness of WPC by a reduction from Independent Set and gives an exact formula for hedge connectivity on the same class. Section~\ref{sec:exact} gives a deterministic polynomial-time exact algorithm under the linear intersection condition and derives the special case $s(A)\le2$. Section~\ref{sec:approximation} gives a deterministic partition-output PTAS for both integer and fractional WPC on full-support split systems and rules out an FPTAS unless $\mathrm{P}=\mathrm{NP}$. Section~\ref{sec:conclusion} concludes with open questions.

\section{Definitions and the matrix density formula}\label{sec:definitions}

We first introduce the hedgegraph model, contraction operations, connectivity measures, and input conventions. We then give a matrix representation of full-support split systems and derive a formula that reduces WPC to row-density maximization.

\subsection{Basic definitions and input conventions}\label{subsec:definitions}

The hedgegraph model, contraction by a vertex partition, hedge connectivity, and integer weak partition connectivity in this subsection follow Chandrasekaran et al.~\cite{r5}. We fix notation and make explicit how repeated hyperedges are counted.

Let $G=(V,\cE)$ be a hedgegraph, where $V$ is the vertex set, $|V|\ge2$, and $\cE$ is the collection of hedges. Each hedge $e\in\cE$ contains finitely many hyperedges, called its \emph{component hyperedges}. Every component hyperedge is nonempty, and distinct component hyperedges within a hedge are vertex-disjoint. This restriction does not apply to hyperedges in different hedges. Each hyperedge belongs to exactly one hedge, and deleting a hedge deletes all its component hyperedges. Forgetting the grouping while retaining all vertices and hyperedges gives the \emph{underlying hypergraph}. Throughout the paper, connectivity refers to connectivity of this underlying hypergraph.

Every hedge and every component hyperedge has its own identifier. Two hyperedges with exactly the same vertices are still distinct if they have different identifiers. For example, if two hedges each contain a hyperedge with vertex set $\{u,v\}$, deleting one hedge does not delete the hyperedge in the other. Similarly, copying a hedge together with all its component hyperedges creates another hedge that is counted separately. The vertex set of a hyperedge is its \emph{support}. Distinct hyperedges with the same support are called \emph{parallel copies}, and we use analogous terminology for repeated hedges.

A cut $(S,V\setminus S)$ divides the vertex set into two nonempty parts, so $\varnothing\ne S\subsetneq V$. A hedge \emph{crosses} the cut if at least one of its component hyperedges contains vertices on both sides. A hedge is counted only once even if several of its component hyperedges cross the cut. Let $d_G(S)$ be the number of crossing hedges. Hedge connectivity is
\[
\lambda(G)=\min_{\varnothing\ne S\subsetneq V}d_G(S).
\]
Deleting all crossing hedges removes every hyperedge joining the two sides. Equivalently, $\lambda(G)$ is the minimum number of hedges whose deletion disconnects the underlying hypergraph.

A vertex partition $\cP$ is a collection of pairwise disjoint nonempty subsets of $V$ whose union is $V$. Its members are called \emph{blocks}, and $|\cP|$ is its number of blocks. Fix a hedge $e$ and retain only its component hyperedges together with all vertices, obtaining $(V,\{e\})$. \emph{Contracting by $\cP$} identifies all vertices in each block into a single new vertex. The image of a hyperedge $h$ consists of the blocks that intersect $h$; the resulting hypergraph is the quotient hypergraph. Distinct component hyperedges may intersect after contraction, so its connected components must be recomputed. Isolated quotient vertices are included in the component count.

Let $\kappa_e(\cP)$ denote the number of connected components in this quotient hypergraph. For $p=|\cP|$, the quantity $p-\kappa_e(\cP)$ measures the connections established by the hedge between blocks: a connected component containing $r$ quotient vertices contributes $r-1$. Summing these contributions gives
\begin{equation}\label{eq:partition-contribution}
D_G(\cP)=\sum_{e\in\cE}\bigl(|\cP|-\kappa_e(\cP)\bigr).
\end{equation}
We define
\begin{equation}\label{eq:wpc-definition}
\FWPC(G)=\min_{\substack{\cP\text{ a partition of }V\\|\cP|\ge2}}
\frac{D_G(\cP)}{|\cP|-1},
\qquad \WPC(G)=\lfloor\FWPC(G)\rfloor.
\end{equation}
The quantity $\WPC(G)$ in~\eqref{eq:wpc-definition} is the integer weak partition connectivity defined in the cited work. To state the matrix formula and approximation results uniformly, we also use $\FWPC(G)$ for the unrounded optimum and call it fractional WPC. Here ``fractional'' means that the objective value is not rounded. Since there are only finitely many partitions, taking the minimum and then rounding down is equivalent to rounding down each score before taking the minimum.

The input explicitly lists all vertices, the component hyperedges of every hedge, and the vertices of every hyperedge. Repeated hedges and hyperedges are listed individually. Apart from the bits needed for identifiers, the input size is polynomially equivalent to
\[
|V|+|\cE|+\sum_{e\in\cE}\sum_{h\in e}|h|.
\]
Here $|h|$ is the number of vertices of $h$, and the final sum counts all listed vertex incidences, counting a vertex separately each time it appears in a different hyperedge.

If $\cE=\varnothing$, then $\lambda(G)=\WPC(G)=\FWPC(G)=0$, and every partition into at least two nonempty blocks is optimal for both WPC objectives. We handle this case separately and assume $m=|\cE|\ge1$ in the matrix formulations and algorithms below.

The comparison between integer WPC and hedge connectivity is given in~\cite{r5}. We include the unrounded version for later use:
\begin{equation}\label{eq:connectivity-bounds}
\left\lfloor\frac{\lambda(G)}2\right\rfloor\le\WPC(G)\le\lambda(G),
\qquad \frac{\lambda(G)}2\le\FWPC(G)\le\lambda(G).
\end{equation}
\begin{proof}
Fix a partition with $p$ blocks. Each block $P\in\cP$ defines a cut $(P,V\setminus P)$, and let $t_e$ be the number of these cuts crossed by $e$. A connected component of the quotient hypergraph of $e$ containing $r\ge2$ vertices corresponds to $r$ block cuts crossed by $e$ and contributes $r-1$ to $p-\kappa_e(\cP)$. Since $r\le2(r-1)$, we have $t_e\le2(p-\kappa_e(\cP))$. Thus,
\[
2D_G(\cP)\ge\sum_{e\in\cE}t_e
=\sum_{P\in\cP}d_G(P)
\ge p\lambda(G)\ge(p-1)\lambda(G).
\]
Dividing by $2(p-1)$ gives the fractional lower bound.

For the upper bound, take a minimum cut $(S,V\setminus S)$ and view its sides as the partition $\cP=\{S,V\setminus S\}$. Each crossing hedge contributes one to $D_G(\cP)$, and each remaining hedge contributes zero. Hence $D_G(\cP)=d_G(S)=\lambda(G)$. Since $|\cP|-1=1$,
\[
\frac{D_G(\cP)}{|\cP|-1}=D_G(\cP)=\lambda(G).
\]
Minimizing over all partitions gives $\FWPC(G)\le\lambda(G)$. Rounding down gives the integer bounds.
\end{proof}

The input model and definitions imply that the decision problem in Theorem~\ref{thm:main} belongs to NP. A certificate is a partition $\cP$ with at least two blocks, encoded by specifying a block identifier for each vertex, using $O(|V|\log|V|)$ bits. A verifier first checks that this is a valid partition, then contracts each hedge by $\cP$ and counts connected components. This computes $D_G(\cP)$ in time polynomial in the explicit input size. For an integer threshold $K$, the certificate is accepted exactly when its integer score is at most $K$, that is,
\[
\left\lfloor\frac{D_G(\cP)}{|\cP|-1}\right\rfloor\le K.
\]
Since $K$ is an integer and $|\cP|-1>0$, this is equivalent to
\begin{equation}\label{eq:np-certificate}
D_G(\cP)<(K+1)(|\cP|-1).
\end{equation}
By the definition of WPC, an accepted certificate exists if and only if $\WPC(G)\le K$, establishing membership in NP.

\subsection{Full-support splits and their matrix representation}\label{subsec:matrix-representation}

We now specify the restricted instance class and its binary matrix representation. We then define the density of a row set and relate row selection to vertex partitions.

If a hedge $e_i$ consists of exactly the two component hyperedges $T_i$ and $V\setminus T_i$, where $\varnothing\ne T_i\subsetneq V$, it divides $V$ into two nonempty parts. We call such a hedge a \emph{full-support split}, and a hedgegraph in which every hedge has this form a \emph{full-support split system}. ``Full-support'' means that no vertex lies outside the two component hyperedges, and ``split'' indicates that they are disjoint. Different hedges may represent the same split and are counted separately.

These instances admit a binary matrix representation. Let $A\in\{0,1\}^{m\times N}$, where $m\ge1$, $N\ge2$, and every row contains both a zero and a one. Write $[N]=\{1,\ldots,N\}$ and $[m]=\{1,\ldots,m\}$. Each column represents a vertex, and each row represents a hedge. The zero and one positions in row $i$ form its two component hyperedges:
\begin{equation}\label{eq:matrix-hedge}
e_i=\bigl\{\{v:A_{i,v}=0\},\{v:A_{i,v}=1\}\bigr\}.
\end{equation}
Denote the resulting hedgegraph by $G_A$. Its two component hyperedges are nonempty and cover all $N$ vertices. \emph{Complementing} a row exchanges all its zeros and ones, merely interchanging the two sides of its split and leaving the hedge unchanged. Copying a row creates a new hedge and its two component hyperedges; all copies are explicitly listed in the input.

For a set of row indices $R\subseteq[m]$, let $A_{R,v}$ be column $v$ restricted to the coordinates in $R$, called its \emph{projection} onto $R$. Distinct columns may become equal after the other rows are removed. Let $c_A(R)$ count the distinct projected columns. The density of a nonempty row set is its cardinality divided by the number of distinct projected columns minus one, and the maximum row-selection density is
\begin{equation}\label{eq:row-density}
c_A(R)=\bigl|\{A_{R,v}:v\in[N]\}\bigr|,
\qquad \rho(A)=\max_{\varnothing\ne R\subseteq[m]}\frac{|R|}{c_A(R)-1}.
\end{equation}
Repeated rows are counted individually in $|R|$, whereas repeated columns contribute only one type to $c_A(R)$. We set $c_A(\varnothing)=1$, since no columns can be distinguished when no coordinates are observed. For nonempty $R$, every row contains both values, so $c_A(R)\ge2$ and the denominator is positive.

For a partition $\cP$, let $u_A(\cP)$ denote the number of rows that are constant within each block, counting repeated rows separately. Given $R\subseteq[m]$, put vertices with the same projected column $A_{R,v}$ into the same block, and denote the resulting partition by $\cP_R$. It has exactly $c_A(R)$ blocks, and every row in $R$ is constant within each block. A partition obtained by further splitting the blocks of $\cP_R$ is called a refinement of $\cP_R$. For example, if $R$ contains just one row, then $\cP_R$ is the two-block partition given by its zero and one sides.

\subsection{The matrix density formula}\label{subsec:matrix-density}

We now prove the matrix density formula.

\begin{lemma}\label{lem:partition-score}
For every partition $\cP$ with $p\ge2$ blocks,
\[
D_{G_A}(\cP)=m(p-1)-u_A(\cP).
\]
\end{lemma}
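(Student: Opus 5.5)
The identity is additive over hedges, so I will compute each summand $p-\kappa_{e_i}(\cP)$ in \eqref{eq:partition-contribution} separately and then sum over $i\in[m]$. The claim follows once each row contributes exactly $p-1$, except for rows constant within every block, which contribute $p-2$. Equivalently, I must show
\[
\kappa_{e_i}(\cP)=
\begin{cases}2,& \text{if row } i \text{ is constant within each block of } \cP,\\ 1,& \text{otherwise.}\end{cases}
\]
Summing then gives $D_{G_A}(\cP)=\sum_i(p-\kappa_{e_i}(\cP))=m(p-1)-u_A(\cP)$.

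Fix a row $i$ with component hyperedges $Z=\{v:A_{i,v}=0\}$ and $O=\{v:A_{i,v}=1\}$. Both are nonempty and $Z\cup O=V$. In the quotient hypergraph of $(V,\{e_i\})$, the image of $Z$ is the set of blocks meeting $Z$, and the image of $O$ is the set of blocks meeting $O$. Because $Z\cup O=V$, every block meets at least one of them, so there are no isolated quotient vertices and every quotient vertex lies in the image of $Z$ or of $O$. The quotient therefore has at most two components. It has exactly two precisely when the two images are disjoint, since each image is a single hyperedge and hence connected.

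The two images share a block exactly when some block meets both $Z$ and $O$, that is, when row $i$ is not constant on that block.

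\begin{itemize}[leftmargin=*]
\item If row $i$ is constant within every block, each block lies entirely in $Z$ or entirely in $O$. Both images are nonempty because $Z,O\ne\varnothing$, and they are disjoint, so $\kappa_{e_i}(\cP)=2$.
\item Otherwise some block meets both images, they merge into one component, and $\kappa_{e_i}(\cP)=1$.
\end{itemize}

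There is no real obstacle here. The only points needing care are the full-support property $Z\cup O=V$, which rules out isolated quotient vertices, and the nonemptiness of $Z$ and $O$, which guarantees two genuine components in the constant case. The hypothesis $p\ge2$ is consistent with this, since a constant row with both values present forces at least two blocks.
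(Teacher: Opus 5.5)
Your proposal is correct and matches the paper's proof: both compute each hedge's contribution separately, getting two quotient components (contribution $p-2$) when the row is constant on every block and one component (contribution $p-1$) otherwise, and then sum over rows. You also state explicitly why full support rules out isolated quotient vertices and why nonemptiness of both sides gives two components, which the paper uses without spelling out.
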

\begin{proof}
Compute the contribution of each row's hedge separately. If the row is constant within each block, every block lies entirely on one side of its split. After contraction, the two component hyperedges remain separate, giving exactly two connected components and a contribution of $p-2$. Otherwise, some block contains both a zero and a one and therefore meets both sides. After contraction, the two component hyperedges intersect at the corresponding quotient vertex, so the quotient hypergraph is connected and the contribution is $p-1$. There are $u_A(\cP)$ rows of the first kind and $m-u_A(\cP)$ of the second. Summing gives
\[
u_A(\cP)(p-2)+(m-u_A(\cP))(p-1)
=m(p-1)-u_A(\cP).
\]
\end{proof}

\begin{lemma}\label{lem:partition-density}
For every matrix considered here,
\[
\max_{\substack{\cP\text{ a partition of }[N]\\|\cP|\ge2}}
\frac{u_A(\cP)}{|\cP|-1}=\rho(A).
\]
\end{lemma}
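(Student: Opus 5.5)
We prove the identity by establishing the two inequalities separately, comparing each partition $\cP$ with the row set of rows constant on its blocks, and each row set $R$ with its induced partition $\cP_R$.

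\textbf{Lower bound ($\ge$).} Fix a nonempty $R\subseteq[m]$ and consider the partition $\cP_R$ defined in Section~\ref{subsec:matrix-representation}. It has exactly $c_A(R)$ blocks. Since $R$ is nonempty and every row contains both a zero and a one, $c_A(R)\ge2$, so $\cP_R$ is admissible. Every row of $R$ is constant within each block of $\cP_R$, hence $u_A(\cP_R)\ge|R|$. Therefore
\[
\frac{u_A(\cP_R)}{|\cP_R|-1}\ge\frac{|R|}{c_A(R)-1},
\]
and maximizing over $R$ shows that the left-hand side of the lemma is at least $\rho(A)$.

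\textbf{Upper bound ($\le$).} Fix a partition $\cP$ with $p\ge2$ blocks, and let $R$ be the set of rows constant within each block of $\cP$, so that $|R|=u_A(\cP)$.

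If $R=\varnothing$, the ratio is $0$. This is at most $\rho(A)$, because $\rho(A)\ge 1/(c_A(\{i\})-1)=1>0$ for any single row $i$ (recall $m\ge1$).

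If $R\ne\varnothing$, all vertices in one block of $\cP$ share the same projection $A_{R,v}$, since every row of $R$ is constant on that block. Hence $v\mapsto A_{R,v}$ is constant on blocks and induces a surjection from $\cP$ onto the set of distinct projected columns. This gives $c_A(R)\le p$, so
\[
\frac{u_A(\cP)}{p-1}=\frac{|R|}{p-1}\le\frac{|R|}{c_A(R)-1}\le\rho(A).
\]
Here the middle inequality uses $p-1\ge c_A(R)-1\ge1$. Equivalently, $\cP$ refines $\cP_R$.

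Combining the two inequalities gives equality. The only delicate points are the positivity of the denominators, which relies on every row being nonconstant, and the empty-$R$ case, which is handled by the trivial bound above. Neither presents a real obstacle; the essential step is the surjection argument showing $c_A(R)\le p$.
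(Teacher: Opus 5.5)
Your proof is correct and follows the paper's argument essentially verbatim: for the upper bound you take the set $R$ of rows constant on the blocks of $\cP$, observe that $\cP$ refines $\cP_R$ so that $c_A(R)\le|\cP|$, and for the lower bound you use $\cP_R$ with $u_A(\cP_R)\ge|R|$. Your explicit observation that $\rho(A)\ge1$ in the empty-$R$ case just spells out a step the paper leaves implicit.
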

\begin{proof}
Given $\cP$, let $R$ contain all rows that are constant within each of its blocks, so $|R|=u_A(\cP)$. If $R=\varnothing$, the ratio for this partition is zero and is at most $\rho(A)$. Otherwise, columns in the same block have the same projection onto $R$, so every block of $\cP$ is contained in a block of $\cP_R$. Thus $\cP$ refines $\cP_R$, giving $|\cP|\ge|\cP_R|=c_A(R)$ and
\[
\frac{u_A(\cP)}{|\cP|-1}
=\frac{|R|}{|\cP|-1}
\le\frac{|R|}{c_A(R)-1}\le\rho(A).
\]
Conversely, for any nonempty $R$, the partition $\cP_R$ has $c_A(R)\ge2$ blocks, and every selected row is constant within each block. Hence $u_A(\cP_R)\ge|R|$. This inequality can be strict because additional rows outside $R$ may have the same property. Consequently,
\[
\frac{u_A(\cP_R)}{|\cP_R|-1}\ge\frac{|R|}{c_A(R)-1}.
\]
Maximizing over $R$ gives the reverse inequality.
\end{proof}

\begin{theorem}[Matrix density formula]\label{thm:matrix-density}
If $A$ has $m\ge1$ nonconstant rows, then
\begin{equation}\label{eq:matrix-density}
\FWPC(G_A)=m-\rho(A),\qquad \WPC(G_A)=m-\lceil\rho(A)\rceil.
\end{equation}
\end{theorem}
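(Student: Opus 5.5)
The plan is to combine Lemma~\ref{lem:partition-score} and Lemma~\ref{lem:partition-density} directly in the definition~\eqref{eq:wpc-definition}. For every partition $\cP$ of the vertex set $[N]$ of $G_A$ with $p=|\cP|\ge2$, Lemma~\ref{lem:partition-score} gives
\[
\frac{D_{G_A}(\cP)}{p-1}=\frac{m(p-1)-u_A(\cP)}{p-1}=m-\frac{u_A(\cP)}{p-1}.
\]
Minimizing the left side over all partitions with at least two blocks is therefore the same as maximizing $u_A(\cP)/(p-1)$ and subtracting the result from $m$. Lemma~\ref{lem:partition-density} identifies this maximum as $\rho(A)$. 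Hence $\FWPC(G_A)=m-\rho(A)$, and the minimizing partitions coincide with the maximizers of $u_A(\cP)/(|\cP|-1)$.

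Two routine checks are needed. First, $G_A$ must have $|V|=N\ge2$ and $m\ge1$ hedges, so the definition applies and partitions with $p\ge2$ exist. Second, both extrema are attained because the set of partitions is finite. The maximum in $\rho(A)$ is also well defined: $m\ge1$ guarantees a nonempty $R$, and every nonempty $R$ has $c_A(R)\ge2$, as noted after~\eqref{eq:row-density}.

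For the integer version, I would use $\WPC(G_A)=\lfloor\FWPC(G_A)\rfloor=\lfloor m-\rho(A)\rfloor$ together with the identity $\lfloor m-x\rfloor=m-\lceil x\rceil$ for integer $m$ and real $x$. This identity follows from $\lfloor -x\rfloor=-\lceil x\rceil$.

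The only step that needs care is already handled by Lemma~\ref{lem:partition-density}: partitions $\cP$ that strictly refine $\cP_R$, or with $u_A(\cP)=0$, never beat $\rho(A)$. Given that lemma, the theorem is a direct substitution, so I do not expect any real obstacle.
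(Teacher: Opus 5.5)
Your proposal is correct and follows essentially the same route as the paper: substitute Lemma~\ref{lem:partition-score} into the definition to rewrite each score as $m-u_A(\cP)/(|\cP|-1)$, invoke Lemma~\ref{lem:partition-density} to identify the maximum ratio as $\rho(A)$, and finish with $\lfloor m-x\rfloor=m-\lceil x\rceil$. The checks you add ($N\ge2$, $m\ge1$, and finiteness of the set of partitions) are harmless details that the paper leaves implicit.
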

\begin{proof}
Lemma~\ref{lem:partition-score} writes each partition score as $m-u_A(\cP)/(|\cP|-1)$, and Lemma~\ref{lem:partition-density} identifies the maximum of the ratio. Therefore,
\[
\FWPC(G_A)=\min_{\cP}
\left(m-\frac{u_A(\cP)}{|\cP|-1}\right)=m-\rho(A).
\]
The identity $\lfloor m-x\rfloor=m-\lceil x\rceil$ for integer $m$ gives the second formula.
\end{proof}

Theorem~\ref{thm:matrix-density} shows that minimizing WPC amounts to selecting many rows while keeping the number of distinct projected columns small, as measured by their ratio. Section~\ref{sec:hardness} uses this tradeoff to encode Maximum Independent Set, while Sections~\ref{sec:exact} and~\ref{sec:approximation} use it to design exact and approximation algorithms.

\section{NP-completeness and separation from hedge connectivity}\label{sec:hardness}

We prove the hardness of WPC by a reduction from Independent Set. Base rows represent vertices of the input graph, and additional reward rows are introduced for pairs of nonadjacent vertices. The base rows identify independent sets, while the reward rows allow larger independent sets to yield greater row-set density. By choosing suitable multiplicities, we control the density of every row selection and show that the optimum is determined by the independence number of the input graph.

\subsection{The construction}\label{subsec:construction}

We reduce from the NP-complete Independent Set decision problem~\cite{r19}: given a simple graph $H=(V,E)$ and an integer $k$, decide whether it contains an independent set of size at least $k$. To exclude trivial instances, assume $n=|V|\ge2$ and $1\le k\le n$. A pair of distinct vertices $u,v$ with $uv\notin E$ is called a \emph{nonedge}. Set
\begin{equation}\label{eq:multiplicities}
b=\binom n2-|E|,\qquad B=2b,\qquad W=B+1.
\end{equation}
Here $b$ is the number of nonedges. Each nonedge will produce two reward-row copies, so $B$ is the total number of reward rows, and $W$ is the number of occurrences of each base row in the final matrix. The number of copies of a row type is its \emph{multiplicity}.

Before duplication, the columns of $A_H$ have three forms: a zero marker $z$, a private marker $a_v$ for each $v\in V$, and an edge marker $a_{uv}$ for each $uv\in E$. For each vertex $v$, construct a base row $x_v$ that is one at $a_v$ and at every edge marker incident with $v$, and zero elsewhere. Include $W$ identical copies of this row. For every nonedge $uv$, construct the reward row
\begin{equation}\label{eq:reward-row}
y_{uv}=x_u\oplus x_v,
\end{equation}
where $\oplus$ denotes coordinatewise exclusive-or, and include two identical copies. The total number of rows, counting copies, is
\begin{equation}\label{eq:row-count}
M=nW+B.
\end{equation}
Finally, duplicate every column once, giving $N=2(1+n+|E|)$ columns. Duplicating columns does not change any projection-type count, but ensures that both sides of every split contain at least two vertices. Every row is nonconstant because it is zero at $z$ and one at some private marker. The private markers also distinguish all base-row and reward-row types.

For $S\subseteq V$, choose one representative among the $W$ copies of $x_v$ for each $v\in S$, and denote this row set by $X_S$. Write $H[S]$ for the subgraph induced by $S$.

\begin{lemma}\label{lem:base-projections}
For every $S\subseteq V$,
\[
c_{A_H}(X_S)=|S|+|E(H[S])|+1.
\]
\end{lemma}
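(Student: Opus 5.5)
Since column duplication does not change projection-type counts, I will work with the undoubled column set $\{z\}\cup\{a_v:v\in V\}\cup\{a_{uv}:uv\in E\}$ and compute the projection of each column onto $X_S$ directly. The projection of a column onto $X_S$ is the vector $(A_{x_w,\cdot})_{w\in S}$, i.e.\ the set of $w\in S$ whose base row has a one in that column. The count will then follow from listing which subsets of $S$ occur as projected columns.

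\textbf{Projections of each column type.}
\begin{itemize}
\item The zero marker $z$ projects to the empty set $\varnothing$.
\item A private marker $a_v$ projects to $\{v\}$ if $v\in S$, and to $\varnothing$ otherwise, because $x_w$ has a one at $a_v$ only when $w=v$.
\item An edge marker $a_{uv}$ projects to $\{u,v\}\cap S$, because $x_w$ is one at $a_{uv}$ exactly when $w\in\{u,v\}$.
\end{itemize}

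\textbf{Counting distinct sets.} The collection of projected sets is
\[
\{\varnothing\}\cup\bigl\{\{v\}:v\in S\bigr\}\cup\bigl\{\{u,v\}\cap S:uv\in E\bigr\}.
\]
An edge marker with $|\{u,v\}\cap S|\le1$ produces $\varnothing$ or a singleton $\{u\}$ with $u\in S$, and both already occur. An edge with both endpoints in $S$, i.e.\ an edge of $H[S]$, produces the two-element set $\{u,v\}$. These sets are distinct from $\varnothing$ and from all singletons. Since $H$ is simple, distinct edges of $H[S]$ give distinct pairs. The distinct types are therefore $\varnothing$, the $|S|$ singletons, and the $|E(H[S])|$ pairs, for a total of $|S|+|E(H[S])|+1$.

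The only subtle point is the bookkeeping that edges not inside $S$ create no new types, and the case $S=\varnothing$. In that case $X_S=\varnothing$, and the convention $c_A(\varnothing)=1$ matches the formula, which gives $0+0+1$. No step presents a real obstacle.
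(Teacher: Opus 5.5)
Your proof is correct and follows essentially the same route as the paper. Both classify each column's projection onto $X_S$: the zero vector, unit vectors from the private markers of $S$, two-one vectors from the edges of $H[S]$, and no new types from the remaining markers. Both also treat $S=\varnothing$ separately, which your argument settles through the convention $c_A(\varnothing)=1$.
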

\begin{proof}
If $S=\varnothing$, every column projects to the same empty vector. Thus $c_{A_H}(X_S)=1$, while $|S|=|E(H[S])|=0$, proving the equality in this case. Now suppose $S\ne\varnothing$.

Restrict attention to the rows in $X_S$. The zero marker gives the zero vector. Each private marker indexed by $v\in S$ gives a unit vector, with its only one in coordinate $x_v$. Each edge whose two endpoints belong to $S$ gives a vector with ones in those two coordinates. Distinct vertices give distinct unit vectors, and distinct edges give distinct vectors with two ones. Every other marker projects to zero or to a unit vector already present. Hence there are exactly $|S|+|E(H[S])|$ nonzero types, and the zero type gives the stated formula.
\end{proof}

\begin{lemma}\label{lem:reward-projections}
Let $uv$ be a nonedge. If $u,v\in S$, adding $y_{uv}$ to $X_S$ does not change the number of projection types. If $\{u,v\}\nsubseteq S$, it strictly increases that number.
\end{lemma}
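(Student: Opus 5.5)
Adding a row to a row set refines the partition of columns by projection type. So $c(X_S\cup\{y_{uv}\})\ge c(X_S)$, and equality holds exactly when the new coordinate is constant on each projection class of $X_S$. The plan is to check this condition directly using the explicit list of projected types from Lemma~\ref{lem:base-projections}. Column duplication is irrelevant throughout, since duplicated columns have identical projections.

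\emph{Case $u,v\in S$.} Since $y_{uv}=x_u\oplus x_v$ and both $x_u$ and $x_v$ are coordinates in $X_S$, any two columns with equal projection onto $X_S$ agree on $x_u$ and on $x_v$. Hence they agree on $y_{uv}$. The new coordinate is therefore a function of the existing projection, and the count of types is unchanged.

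\emph{Case $\{u,v\}\nsubseteq S$.} By symmetry assume $u\notin S$. I will exhibit two columns that share a projection onto $X_S$ but differ in $y_{uv}$.
\begin{itemize}[leftmargin=*]
\item Take the zero marker $z$ and the private marker $a_u$.
\item On every row $x_w$ with $w\in S$, the column $a_u$ is zero, because $w\ne u$ and private markers are one only in their own base row. The column $z$ is zero everywhere. So these two columns have the same projection onto $X_S$.
\item On $y_{uv}$, the column $z$ gives $0$. The column $a_u$ gives $x_u(a_u)\oplus x_v(a_u)=1\oplus0=1$, using $u\ne v$.
\end{itemize}
Thus one projection class splits, and the number of types strictly increases. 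This argument also covers $S=\varnothing$.

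The only delicate point is choosing the witness pair in the second case. The zero marker together with the private marker of the missing endpoint works uniformly, with no need to consider edge markers. I expect no real obstacle.
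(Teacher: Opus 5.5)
Your proof is correct and follows the paper's argument: when $u,v\in S$, the new coordinate $y_{uv}=x_u\oplus x_v$ is determined by the retained coordinates, and when $u\notin S$, the zero marker $z$ and the private marker $a_u$ agree on $X_S$ but differ on $y_{uv}$. Your explicit remark that the $S=\varnothing$ case is covered is a small addition the paper leaves implicit.
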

\begin{proof}
If both endpoints belong to $S$, the new coordinate is the exclusive-or of the already retained coordinates $x_u$ and $x_v$. Columns with the same projection therefore remain equal. Otherwise, assume without loss of generality that $u\notin S$. The private marker $a_u$ and the zero marker $z$ both project to zero on $X_S$, but $y_{uv}$ is one at $a_u$ and zero at $z$, so the number of projection types strictly increases.
\end{proof}

We next show that the maximum density over all nonempty row sets is exactly $W+\alpha(H)-1$. Moreover, for any row set attaining this density, the vertices indexing its selected base rows form a maximum independent set.

\begin{theorem}[Independent-set density]\label{thm:independent-density}
The construction satisfies
\begin{equation}\label{eq:independent-density}
\rho(A_H)=W+\alpha(H)-1.
\end{equation}
\end{theorem}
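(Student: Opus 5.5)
The plan is to prove the two inequalities separately. The lower bound comes from an explicit row set built from a maximum independent set. The upper bound is a case analysis on an arbitrary nonempty row set, driven by the multiplicity choice $W=B+1$: the whole reward block is then worth less than one extra projection type.

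\emph{Lower bound.} Let $S$ be a maximum independent set, so $a=|S|=\alpha(H)\ge1$. Take $R$ to consist of all $W$ copies of $x_v$ for $v\in S$, together with both copies of $y_{uv}$ for every pair $u,v\in S$. Each such pair is a nonedge because $S$ is independent.

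\begin{itemize}
\item Copies of a row never change the projection types, so $c_{A_H}(R)=c_{A_H}(R')$, where $R'$ is $X_S$ together with one copy of each of these reward rows.
\item By Lemma~\ref{lem:reward-projections}, adding these reward rows to $X_S$ leaves the count unchanged, since both endpoints lie in $S$.
\item By Lemma~\ref{lem:base-projections}, $c_{A_H}(X_S)=a+0+1$.
\end{itemize}

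Hence $c_{A_H}(R)-1=a$. Since $|R|=aW+2\binom a2=aW+a(a-1)$, the density of $R$ is $W+a-1$. Therefore $\rho(A_H)\ge W+\alpha(H)-1$.

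\emph{Upper bound.} Fix a nonempty $R$. Let $S$ be the set of vertices $v$ such that at least one copy of $x_v$ lies in $R$, and put $s=|S|$. At most $W$ copies of each base row exist, and at most $B$ reward rows exist in total. Hence
\[
|R|\le Ws+(\text{number of reward rows in }R)\le Ws+B<W(s+1).
\]
Removing duplicate copies does not change $c_{A_H}$, and adding rows never decreases it. So $c_{A_H}(R)\ge c_{A_H}(X_S)=s+|E(H[S])|+1$. I will split into three cases.

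\begin{itemize}
\item \emph{$s=0$.} Then $|R|\le B<W$, while $c_{A_H}(R)-1\ge1$ because every row is nonconstant. The density is therefore below $W\le W+\alpha(H)-1$.
\item \emph{$s\ge1$ and the count gains at least one type.} Suppose $R$ contains a reward row $y_{uv}$ with $\{u,v\}\nsubseteq S$, or $H[S]$ has an edge. In the first situation, Lemma~\ref{lem:reward-projections} applied to $X_S\cup\{y_{uv}\}$, followed by monotonicity, gives $c_{A_H}(R)-1\ge s+1$. In the second, the formula for $c_{A_H}(X_S)$ gives the same bound. Then $|R|<W(s+1)\le(W+\alpha(H)-1)(c_{A_H}(R)-1)$.
\item \emph{Remaining case.} Here $S$ is independent, so $s\le\alpha(H)$. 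Every reward row in $R$ has both endpoints in $S$, so at most $s(s-1)$ reward rows are present, and Lemma~\ref{lem:reward-projections} gives $c_{A_H}(R)-1=s$. Thus $|R|\le Ws+s(s-1)$, and the density is at most $W+s-1\le W+\alpha(H)-1$.
\end{itemize}

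The main obstacle is controlling how much several reward rows with endpoints outside $S$ can raise $|R|$ without raising the type count proportionally. The key observation, which I will verify carefully, is that no fine accounting is needed: a single extra type already pays for the entire reward block, because $B<W$. The only delicate points are the monotonicity of $c_{A_H}$ under adding rows and its invariance under adding copies; both are immediate from the definition of projection types.
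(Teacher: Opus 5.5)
Your proof is correct and follows essentially the same route as the paper. It uses the same three-way split on the set $S$ of indexed base rows ($s=0$; an extra projection type forced by a non-independent $S$ or an outside reward row; and the independent case), the same key inequality $Ws+B<W(s+1)$ from $B=W-1$, and the same maximum-independent-set construction with all internal reward copies; the only cosmetic difference is that you bound an arbitrary nonempty $R$ directly against $W+\alpha(H)-1$, whereas the paper first restricts to optimal row sets containing all copies of each selected type and compares the bad cases with the singleton density $W$.
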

\begin{proof}
First consider repeated rows. If a selection contains only some copies of a row type, adding the remaining copies increases the number of selected rows without changing the projection-type count. Hence an optimal row set $R$ includes either every copy of a row type or none of them.

Let $S$ be the set of graph vertices indexing the selected base rows, and write $s=|S|$. There are three cases.

First, suppose $s>0$, but $S$ is not independent or some selected reward row has an endpoint outside $S$. If $S$ is not independent, Lemma~\ref{lem:base-projections} gives at least $s+1$ nonzero types. If a selected reward row has an endpoint outside $S$, Lemma~\ref{lem:reward-projections} gives the same lower bound. Adding further rows cannot decrease the number of types. There are $Ws$ selected base-row copies and at most $B=W-1$ reward-row copies, so
\begin{equation}\label{eq:bad-density}
\frac{|R|}{c_{A_H}(R)-1}
\le\frac{Ws+B}{s+1}
=W-\frac1{s+1}<W.
\end{equation}

Second, suppose $s=0$. Then the nonempty selection contains only reward rows. Its numerator is at most $B$, and its denominator is at least one, so its density is at most $B<W$. When $B=0$, there is no such nonempty selection.

Third, suppose $s>0$, $S$ is independent, and every selected reward row has both endpoints in $S$. All $\binom s2$ pairs within $S$ are nonedges. By Lemma~\ref{lem:reward-projections}, we can include all their reward-row copies without changing the $s+1$ projection types. This gives density
\begin{equation}\label{eq:independent-selection}
\frac{Ws+2\binom s2}{s}=W+s-1.
\end{equation}
For a fixed set $S$ of base-row indices, retaining all internal reward-row copies maximizes the density in this third case; including a reward row with an endpoint outside $S$ gives the first case. A singleton independent set attains density $W$, which exceeds the upper bounds in the first two cases.

Finally, maximizing $s$ over independent sets gives $s=\alpha(H)$ and proves~\eqref{eq:independent-density}.
\end{proof}

\subsection{The complexity proof}\label{subsec:complexity-proof}

We can now prove Theorem~\ref{thm:main}.

\begin{proof}[Proof of Theorem~\ref{thm:main}]
We first set the decision threshold. Theorems~\ref{thm:matrix-density} and~\ref{thm:independent-density} show that the maximum density of the reduction matrix is an integer, so the two WPC optima coincide:
\begin{equation}\label{eq:hard-values}
\WPC(G_{A_H})=\FWPC(G_{A_H})=M-W-\alpha(H)+1.
\end{equation}
Setting $K=M-W-k+1$ gives
\[
\alpha(H)\ge k\quad\Longleftrightarrow\quad\WPC(G_{A_H})\le K.
\]
Thus deciding whether the constructed instance has WPC at most $K$ decides whether $H$ has an independent set of size at least $k$.

We next check the output size. Since $W\le n(n-1)+1$, we have $M=O(n^3)$ and $N=O(n^2)$. The two component hyperedges associated with each row together contain all $N$ vertices, so the number of listed vertex incidences is $MN=O(n^5)$. All row copies are explicitly listed, and their number is polynomial. Vertex and hedge identifiers require $O(\log n)$ bits. The threshold $K$ also has $O(\log n)$ bits and is nonnegative: if $B=0$, then $K=n-k\ge0$; if $B>0$, then $K=(n-1)W+B-k+1\ge0$. This is therefore a polynomial-time reduction under the explicit input model.

Finally, we verify connectivity and the other structural restrictions. For every private marker $a_v$, choose a graph vertex $w\ne v$. The zero side of $x_w$ contains both $z$ and $a_v$, so all private-marker vertices are connected to $z$. For every edge marker $a_{uv}$, the one side of $x_u$ contains both $a_{uv}$ and $a_u$, so it is also connected to $z$. After column duplication, each new vertex belongs to the same hyperedges as its original, preserving connectivity. Each hedge is unweighted and consists of exactly two disjoint component hyperedges covering all vertices, each with at least two vertices. Membership in NP was established by~\eqref{eq:np-certificate}, completing the proof.
\end{proof}

\subsection{Exact hedge connectivity on the hard instance class}\label{subsec:hedge-connectivity}

The reduction shows that WPC remains hard on a class where hedge connectivity itself is easy to compute. Let $\mu(A)$ be the maximum multiplicity of a split represented by the rows of $A$, treating a row and its complement as the same split.

\begin{proposition}\label{prop:split-cut}
For every matrix representing a full-support split system,
\begin{equation}\label{eq:split-connectivity}
\lambda(G_A)=m-\mu(A).
\end{equation}
\end{proposition}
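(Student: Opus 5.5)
The plan is to compute $d_{G_A}(S)$ directly for an arbitrary cut and then minimize over $S$. Fix $\varnothing\ne S\subsetneq V$, and let row $i$ have component hyperedges $T_i=\{v:A_{i,v}=1\}$ and $V\setminus T_i$. The hedge $e_i$ fails to cross $(S,V\setminus S)$ exactly when neither of its component hyperedges meets both sides. Since the two hyperedges partition $V$, this means every vertex of $S$ lies on one side of the split and every vertex of $V\setminus S$ lies on the other. In other words, $\{S,V\setminus S\}=\{T_i,V\setminus T_i\}$: the row, up to complementation, is the indicator of $S$.

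Next I would establish a key observation. Since $S$ is nonempty and proper, a row is constant on both $S$ and $V\setminus S$ exactly when it is constant on each block of the two-block partition $\{S,V\setminus S\}$. If such a row is constant on both blocks with the same value, it is a constant row, which is excluded. Hence the noncrossing hedges are exactly those rows representing the split $\{S,V\setminus S\}$. This gives
\[
d_{G_A}(S)=m-\#\{i: \text{row } i \text{ represents the split } \{S,V\setminus S\}\}.
\]
Equivalently, by Lemma~\ref{lem:partition-score} with $p=2$, $d_{G_A}(S)=D_{G_A}(\{S,V\setminus S\})=m-u_A(\{S,V\setminus S\})$, and $u_A$ counts precisely these rows.

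Finally, I would minimize over $S$. Every split $\{T_i,V\setminus T_i\}$ represented by some row is a valid cut, obtained by choosing $S=T_i$. Such a cut achieves $m$ minus that split's multiplicity, so the minimum is at most $m-\mu(A)$. Conversely, any cut has at most $\mu(A)$ noncrossing rows: either none if the cut is not represented, or the multiplicity of its split. So the minimum is at least $m-\mu(A)$, and $m\ge1$ guarantees that $\mu(A)\ge1$ is well defined.

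No step is really an obstacle. The only point needing care is the identification of row-and-complement as one split, which ensures that the count matches $\mu(A)$ exactly rather than double counting. This also makes the polynomial-time computability clear: normalize each row so that a fixed reference column is $0$, then count duplicates.
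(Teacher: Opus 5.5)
Your proposal is correct and matches the paper's proof: a hedge fails to cross $(S,V\setminus S)$ exactly when its two component hyperedges are the two sides of the cut, so every cut has at most $\mu(A)$ noncrossing hedges, with equality at a most frequent split. Your constant-row remark and your cross-check through Lemma~\ref{lem:partition-score} with $p=2$ are both valid, but they are optional additions to that same argument.
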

\begin{proof}
Fix a nontrivial cut. If a hedge does not cross it, each of its two component hyperedges lies entirely on one side. Since they cover all vertices and both cut sides are nonempty, these component hyperedges must be exactly the two sides of the cut, up to exchanging them. Thus the noncrossing hedges are precisely those representing this split, and there are at most $\mu(A)$ of them. The cut given by a most frequent split has exactly $\mu(A)$ noncrossing hedges and $m-\mu(A)$ crossing hedges, proving the formula.
\end{proof}

In $A_H$, every row is zero at $z$, so no two row types are complements. The private markers distinguish all base-row and reward-row types. Each base row has multiplicity $W$, and each reward row has multiplicity two. If $B=0$, there are only base rows and $W=1$; if $B>0$, then $W\ge3$. Hence $\mu(A_H)=W$, and
\begin{equation}\label{eq:separation}
\lambda(G_{A_H})=M-W,\qquad
\WPC(G_{A_H})=M-W-\alpha(H)+1.
\end{equation}
The first value is determined directly by the multiplicities in the input, whereas the second contains the independence number of the original graph. Proposition~\ref{prop:split-cut} gives a polynomial-time minimum-cut algorithm for all full-support split systems, not just the instances produced by the reduction.

\section{Exact algorithms from support structure}\label{sec:exact}

Section~\ref{sec:definitions} reduces WPC to row-density maximization, but in general there are exponentially many row sets to compare. We exploit the pattern of intersections among the one positions of the rows to avoid this enumeration. The key question is which originally distinct columns can become equal when some rows are removed. Throughout this section, $A$ has at least one row and every row contains both zero and one.

\subsection{Reference columns and linear supports}\label{subsec:linear-supports}

Choose any column as a reference column. Complement each row that is one in this column, and leave the other rows unchanged. The reference column then becomes all zero. We call this operation \emph{normalization} with respect to the reference column. Complementing a row only exchanges the sides of its split, so it preserves $G_A$. It also preserves whether any two columns agree on any selected row set, and hence preserves every $c_A(R)$ and $\rho(A)$.

If two columns agree on all rows of the matrix, no later row selection can distinguish them. Group such columns into \emph{classes of identical columns} and retain one representative per class when counting projection types. This compression is used only to record column types and compute density. Each class still represents all its original vertices; when producing a partition, place those vertices back into the block containing the representative.

After normalization and compression, the reference column belongs to the unique all-zero column class, denoted by $C_0$. For row $i$, let $S_i$ be the set of column classes in which the row is one, called its \emph{support}. Since $C_0$ is zero in every row, $C_0\notin S_i$. Thus $|S_i|$ counts the one entries in row $i$ after identical columns have been merged.

For a fixed reference column, define the maximum row-support size after normalization and compression by
\[
s_{C_0}(A)=\max_{i\in[m]}|S_i|.
\]
Minimizing over the reference column gives
\[
s(A)=\min_{C_0}s_{C_0}(A),
\]
which we call the \emph{minimum row-support number} of $A$.

For a fixed reference column, construct an auxiliary hypergraph $Q$ whose vertices are the column classes other than $C_0$ and whose hyperedges are the row supports $S_i$. Repeated rows give separately counted hyperedge copies. For $X\subseteq V(Q)$, the induced subhypergraph $Q[X]$ contains all supports wholly contained in $X$, counting copies separately. Write
\[
\delta(Q)=\max_{\varnothing\ne X\subseteq V(Q)}
\frac{|E(Q[X])|}{|X|},
\qquad \mu(Q)=\max_T|\{i:S_i=T\}|.
\]
Here $\delta(Q)$ is the maximum induced density, measured as the number of induced hyperedges divided by the number of vertices, and $\mu(Q)$ is the largest support multiplicity. They measure, respectively, how many rows can be contained within a vertex set and how often one row type occurs.

The supports satisfy the \emph{linear intersection condition} if any two distinct support sets have at most one common vertex. Repeated copies of the same support are allowed; only distinct support sets are compared. Under this condition, the matrix density is determined by the maximum induced density and maximum support multiplicity of $Q$.

\begin{theorem}\label{thm:linear-support}
If the supports satisfy the linear intersection condition for some reference column, then
\[
\rho(A)=\max\{\delta(Q),\mu(Q)\}.
\]
On this class, both WPC values and an optimal partition can be computed in deterministic polynomial time.
\end{theorem}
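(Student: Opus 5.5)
The plan is to work with a fixed reference column for which the linear intersection condition holds, after normalization and compression. These operations preserve every $c_A(R)$ and hence $\rho(A)$. In this normalized form, the projection of a column class $v$ onto $R$ is the indicator vector $(\mathbf 1[v\in S_i])_{i\in R}$. So $C_0$, and every class outside $X_R=\bigcup_{i\in R}S_i$, projects to the zero vector, and every class in $X_R$ projects to a nonzero vector. Writing $k(R)$ for the number of distinct nonzero patterns realized by classes in $X_R$, we get $c_A(R)-1=k(R)$. The theorem then reduces to proving two inequalities for $\max_R |R|/k(R)$.

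\emph{Lower bound.} Given $\varnothing\ne X\subseteq V(Q)$ with $E(Q[X])\ne\varnothing$, let $R$ be the set of all rows whose support lies in $X$. Then $X_R\subseteq X$, so $k(R)\le|X|$ and the density is at least $|E(Q[X])|/|X|$. If $Q[X]$ has no hyperedges its ratio is $0$, which does not matter because $\mu(Q)\ge1$. Taking $R$ to be all copies of a most frequent support $T$ gives a single nonzero pattern, hence density $\mu(Q)$. Therefore $\rho(A)\ge\max\{\delta(Q),\mu(Q)\}$.

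\emph{Upper bound (main step).} Fix nonempty $R$ and let $\mathcal T$ be the set of distinct supports occurring in $R$, where $T$ occurs $r_T\le\mu(Q)$ times. Let $Y\subseteq X_R$ be the classes lying in at least two distinct members of $\mathcal T$. The key claim is that the pattern classes are exactly the following.
\begin{itemize}
\item Each vertex of $Y$ forms a singleton pattern class. Two vertices of $Y$ with the same pattern would lie in two common distinct supports, contradicting linearity.
\item For each $T\in\mathcal T$ with $T\not\subseteq Y$, the vertices lying in $T$ and in no other member of $\mathcal T$ form one further pattern class.
\end{itemize}
Hence $k(R)=|Y|+|\mathcal T'|$, where $\mathcal T'=\{T\in\mathcal T:T\not\subseteq Y\}$. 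Split $|R|=|R_Y|+\sum_{T\in\mathcal T'}r_T$, where $R_Y$ consists of the rows with support inside $Y$, so $|R_Y|\le|E(Q[Y])|$. The mediant inequality then gives
\[
\frac{|R|}{k(R)}\le\max\Bigl\{\frac{|E(Q[Y])|}{|Y|},\ \max_{T\in\mathcal T'}r_T\Bigr\}\le\max\{\delta(Q),\mu(Q)\}.
\]
When $Y=\varnothing$ the first term is simply omitted. I expect the care here to lie in the pattern-class count, especially in verifying that distinct supports give distinct private classes.

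\emph{Algorithm.} First try each of the $N$ reference columns, normalize, compress, and test the linear intersection condition pairwise; all of this is polynomial. Computing $\mu(Q)$ is a matter of counting supports. Computing $\delta(Q)$ is a maximum-density subhypergraph problem. It is solved exactly by the standard closure/flow reduction~\cite{r16,r21}: for a parameter $\lambda$, test whether $\max_X\bigl(|E(Q[X])|-\lambda|X|\bigr)>0$ via a maximum-weight closure. The optimum is a fraction with denominator at most $|V(Q)|$, so a search over such fractions, or parametric flow~\cite{r14}, finds it exactly together with a maximizing $X$.

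From a maximizing $X$, or from the copies of a most frequent support, form the row set $R$ as in the lower bound and output $\cP_R$, expanding compressed classes back to their vertices. By Lemma~\ref{lem:partition-density} and its proof, this partition attains $\rho(A)$. Theorem~\ref{thm:matrix-density} then yields both $\FWPC$ and $\WPC$, and this partition is optimal for both objectives.
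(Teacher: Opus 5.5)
Your proof is correct. The lower bound and the algorithmic part essentially match the paper's, but your upper bound takes a genuinely different route.

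\textbf{The paper's route.} The paper argues extremally. It fixes an optimal $R$ containing all copies of each selected type and assumes $\rho(A)>\mu(Q)$. It then shows that if two incident classes $u,v$ had the same projection, linearity would force every selected support meeting $\{u,v\}$ to be a single support $T$. Deleting the $w\le\mu(Q)<\rho(A)$ copies of $T$ removes at least one projection type and strictly raises the density, which contradicts optimality. So in an optimal selection every incident class has its own type, $c_A(R)-1=|X|$, and $\rho(A)\le\delta(Q)$.

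\textbf{Your route.} You instead compute $c_A(R)$ exactly for every nonempty $R$ under linearity. A class's pattern is determined by which members of $\mathcal T$ contain it, which gives $c_A(R)-1=|Y|+|\mathcal T'|$. The mediant inequality then splits the density into $|R_Y|/|Y|\le\delta(Q)$ and the per-support rates $r_T\le\mu(Q)$.

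\textbf{Comparison.} Your version needs no optimal selection, no completion of copy classes, no case split on $\rho(A)$ versus $\mu(Q)$, and no strict-improvement calculation or nonemptiness check for the reduced set. It also makes transparent why $\mu(Q)$ appears at all: all private classes of a support collapse to a single type, so that support's copies are paid for at rate $r_T$. The paper's exchange argument avoids classifying all pattern classes and only needs local reasoning about one coinciding pair. Both arguments hinge on the same fact, that linearity confines any coincidence of nonzero projections to a single support.
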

\begin{proof}
First we prove $\rho(A)\ge\mu(Q)$. Choose a support $T$ of maximum multiplicity, and let $R$ contain all row copies with support $T$. Then $|R|=\mu(Q)$. These rows are identical, so columns inside $T$ project to the all-one vector and columns outside $T$ project to the all-zero vector. The rows are nonconstant, so $T$ is nonempty, and the zero reference column ensures that there is a column outside $T$. Thus $c_A(R)=2$, giving
\[
\rho(A)\ge\frac{|R|}{c_A(R)-1}
=\frac{\mu(Q)}{2-1}=\mu(Q).
\]

Next we prove $\rho(A)\ge\delta(Q)$. Take a nonempty set $X\subseteq V(Q)$ attaining $\delta(Q)$, and let $R$ contain all rows whose supports lie entirely in $X$, counting copies separately. By construction, these rows correspond bijectively to the hyperedge copies of $Q[X]$, so
\[
|R|=|E(Q[X])|.
\]
Since the matrix has at least one row and no constant rows, $Q$ has a nonempty hyperedge. Hence $\delta(Q)>0$ and $R$ is nonempty. All columns outside $X$, including the zero reference column, project to zero. Nonzero projection types can therefore arise only from columns in $X$. Each vertex of $X$ corresponds to one compressed column, and some of these columns may have the same projection, so
\[
c_A(R)-1\le|X|.
\]
Consequently,
\[
\rho(A)\ge\frac{|R|}{c_A(R)-1}
\ge\frac{|E(Q[X])|}{|X|}=\delta(Q).
\]
Together, these choices prove
\[
\rho(A)\ge\max\{\delta(Q),\mu(Q)\}.
\]

For the upper bound, if $\rho(A)\le\mu(Q)$, there is nothing to prove. It remains to consider $\rho(A)>\mu(Q)$ and establish $\rho(A)\le\delta(Q)$. Choose a nonempty row set $R$ attaining $\rho(A)$. Adding the remaining copies of a selected row type does not change the number of projection types and increases the numerator. Thus $R$ contains all copies of every selected type.

Suppose that two distinct vertices $u,v$ appearing in at least one selected support have the same projection. Every selected support then contains either both vertices or neither. Their common projection is nonzero, so some selected support contains both. If two distinct selected support sets contained them, their intersection would contain at least two vertices, violating the linear intersection condition. Hence all selected rows whose supports contain $u$ or $v$ have one common support $T$.

Let $w$ be the number of copies of $T$, so $w\le\mu(Q)$. Delete these rows from $R$ to obtain $R'$, with
\[
|R'|=|R|-w.
\]
The columns of $u,v$ become zero on the remaining rows, so their former common nonzero type merges into the zero type. Removing rows cannot distinguish previously equal projections; therefore the number of nonzero types decreases by at least one, and $c_A(R')<c_A(R)$.

Moreover, $R'$ is nonempty. Otherwise, every row of $R$ would have support $T$, producing just the all-zero and all-one projection types and density $w\le\mu(Q)$, contrary to $\rho(A)>\mu(Q)$. Since $R'$ is nonempty and no row is constant, $c_A(R)>c_A(R')\ge2$. Thus,
\[
\frac{|R'|}{c_A(R')-1}
=\frac{|R|-w}{c_A(R')-1}
\ge\frac{|R|-w}{(c_A(R)-1)-1}
>\frac{|R|}{c_A(R)-1}.
\]
The strict inequality follows from
\[
\frac{|R|-w}{(c_A(R)-1)-1}-\frac{|R|}{c_A(R)-1}
=\frac{|R|-w(c_A(R)-1)}{(c_A(R)-1)(c_A(R)-2)}>0,
\]
because $|R|/(c_A(R)-1)=\rho(A)>\mu(Q)\ge w$. This contradicts optimality of $R$.

It follows that all vertices incident with selected supports have distinct nonzero projections. Let $X$ be their set. Columns outside $X$ project to zero, and each vertex in $X$ contributes one distinct nonzero type. Therefore,
\[
c_A(R)-1=|X|.
\]
Every selected support lies entirely in $X$, so all corresponding hyperedge copies belong to $E(Q[X])$. Hence $|R|\le|E(Q[X])|$, and
\[
\rho(A)=\frac{|R|}{c_A(R)-1}
=\frac{|R|}{|X|}
\le\frac{|E(Q[X])|}{|X|}\le\delta(Q).
\]
Combining the bounds gives
\[
\rho(A)=\max\{\delta(Q),\mu(Q)\}.
\]
\end{proof}

The structural formula yields an optimal row set once $\mu(Q)$ and $\delta(Q)$ are known. Grouping columns with equal projections onto that row set then gives an optimal partition.

Exact polynomial-time algorithms for maximum induced density in weighted hypergraphs are known~\cite{r27}. We use the standard maximum-weight closure formulation~\cite{r21} to compute $\delta(Q)$ and a nonempty vertex set attaining it, and include the details for completeness. For a nonnegative rational threshold $a=u/v$, where $u,v$ are integers and $v>0$,
\[
\delta(Q)>a
\quad\Longleftrightarrow\quad
\max_{X\subseteq V(Q)}
\bigl(v|E(Q[X])|-u|X|\bigr)>0.
\]
The right-hand side can be computed using the standard reduction from maximum-weight closure to minimum cut. Give each hyperedge-copy node weight $v$ and each vertex node weight $-u$, and require the selection of a hyperedge node to imply the selection of all its vertex nodes. For a fixed vertex set $X$, choosing every induced hyperedge copy is optimal, and the resulting weight is exactly
\[
v|E(Q[X])|-u|X|.
\]

More explicitly, add a source and a sink. Add an arc of capacity $v$ from the source to each hyperedge node, an arc of capacity $u$ from each vertex node to the sink, and an arc of capacity $mv+1$ from each hyperedge node to each of its vertex nodes, where $m$ is the number of hyperedge copies. The cut whose source side contains only the source has capacity $mv$. A minimum cut therefore cannot cross an arc of capacity $mv+1$, ensuring that a hyperedge node on the source side has all its vertex nodes on that side as well. Such a cut has capacity $mv$ minus the corresponding closure weight. Hence one minimum-cut computation performs the threshold test.

Let $q=|V(Q)|$. The maximum induced density belongs to the candidate set
\[
\left\{\frac{j}{r}:0\le j\le m,\ 1\le r\le q\right\}.
\]
Sorting the distinct candidates and using binary search therefore determines $\delta(Q)$ exactly. Two distinct candidates differ by at least $1/q^2$. Since $Q$ contains at least one nonempty hyperedge, $\delta(Q)\ge1/q$, and hence $\delta(Q)-1/(2q^2)>0$. Running the closure subroutine again at threshold
\[
a=\delta(Q)-\frac{1}{2q^2}
\]
returns a closure of positive weight and hence a nonempty vertex set of density greater than $a$. By the separation between candidate values, this density must equal $\delta(Q)$. Thus the subroutine returns both the exact optimum and a set attaining it. We denote it by $\mathrm{MaxDensity}(Q)$. Algorithm~\ref{alg:linear-wpc} gives the full WPC algorithm under the linear intersection condition.

The main algorithm checks at most $N$ reference columns. Normalization and verification of the linear intersection condition take polynomial time for each column. The density subroutine has at most $(m+1)q$ candidate values and makes $O(\log((m+1)q))$ minimum-cut computations. Each network and the bit lengths of its capacities have polynomial size, so the entire algorithm runs in deterministic polynomial time.

\begin{algorithm}[!htb]
\caption{Exact WPC computation under the linear intersection condition}
\label{alg:linear-wpc}
\small
\begin{algorithmic}[1]
\Require $A\in\{0,1\}^{m\times N}$, with $m\ge1$, $N\ge2$, and no constant rows
\Ensure Both WPC values and a common optimal partition, or a report that no reference column satisfies the condition
\State Merge identical columns of $A$ to obtain $B$, recording the original vertices represented by each column
\State $\mathrm{found}\gets\mathrm{false}$
\For{each column $j$ of $B$}
    \State $\widehat B\gets B$
    \State Complement each row of $\widehat B$ whose entry in column $j$ is one
    \State Let $C_0$ be column $j$ and $S_i$ the set of one positions in row $i$
    \If{any two distinct supports intersect in at most one vertex}
        \State $\mathrm{found}\gets\mathrm{true}$
        \State \textbf{break}
    \EndIf
\EndFor
\If{$\mathrm{found}=\mathrm{false}$}
    \State \Return No suitable reference column; the algorithm does not apply
\EndIf
\State Construct $Q$ on the columns other than $C_0$, with one hyperedge $S_i$ per row, retaining copies
\State Count support multiplicities and choose a support $T$ of maximum multiplicity
\State $\mu\gets|\{i:S_i=T\}|$
\State $(\delta,X)\gets\Call{MaxDensity}{Q}$
\If{$\mu\ge\delta$}
    \State $R\gets\{i:S_i=T\}$
\Else
    \State $R\gets\{i:S_i\subseteq X\}$
\EndIf
\State Group columns of $\widehat B$ with equal projections onto $R$ into the same block
\State Replace each representative by all its original vertices to obtain $\cP_R$
\State $\rho\gets\max\{\mu,\delta\}$
\State \Return $\FWPC(G_A)=m-\rho$, $\WPC(G_A)=m-\lceil\rho\rceil$, and $\cP_R$
\end{algorithmic}
\end{algorithm}

\subsection{Exact algorithms and limitations for small support numbers}\label{subsec:small-support}

When the minimum row-support number is at most two, the structural formula becomes a density formula for a multigraph. Assume $s(A)\le2$. By definition, we can choose a reference column such that every row has at most two one entries after normalization and compression of identical columns. Fix such a reference column, and construct a multigraph $Q_A$ on the column classes other than the all-zero class $C_0$.

A row with support $S_i=\{u\}$ becomes a loop at $u$, and a row with support $S_i=\{u,v\}$ becomes an edge joining $u$ and $v$. There are no empty supports because no row is constant. Each row copy contributes a separate edge, so parallel edges and parallel loops are allowed.

For $X\subseteq V(Q_A)$, the induced subgraph $Q_A[X]$ includes edges with both endpoints in $X$ and loops at vertices of $X$. Each edge copy is counted separately, and each loop copy is counted once. Write
\[
\delta(Q_A)=\max_{\varnothing\ne X\subseteq V(Q_A)}
\frac{|E(Q_A[X])|}{|X|}.
\]
Let $\beta(Q_A)$ be the maximum number of edges joining the same pair of distinct vertices, or zero if there are no nonloop edges.

\begin{corollary}[Minimum row-support number at most two]\label{cor:support-two}
If $s(A)\le2$, then
\begin{equation}\label{eq:support-two}
\rho(A)=\max\{\delta(Q_A),\beta(Q_A)\}.
\end{equation}
On this class, fractional WPC, integer WPC, and a common optimal partition can be computed in deterministic polynomial time.
\end{corollary}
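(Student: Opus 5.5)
The natural route is to reduce Corollary~\ref{cor:support-two} to Theorem~\ref{thm:linear-support}. Fix a reference column for which every normalized, compressed support has size one or two. First I would check that the linear intersection condition holds automatically. Two distinct supports of size at most two cannot share two vertices: if they did, both would equal the same two-element set, so they would not be distinct. Hence Theorem~\ref{thm:linear-support} applies to the hypergraph $Q$ built from this reference column and gives $\rho(A)=\max\{\delta(Q),\mu(Q)\}$.

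Next I would identify $Q$ with $Q_A$. Singleton supports are loops, two-element supports are edges, and copies correspond. Induced hyperedges of $Q[X]$ are exactly the edges and loops of $Q_A[X]$, so $\delta(Q)=\delta(Q_A)$. It then remains to show
\[
\max\{\delta(Q_A),\mu(Q)\}=\max\{\delta(Q_A),\beta(Q_A)\}.
\]
Here $\mu(Q)$ is the larger of $\beta(Q_A)$ and the maximum loop multiplicity at a single vertex. A vertex $u$ carrying $\ell$ loops gives $\delta(Q_A)\ge|E(Q_A[\{u\}])|/1\ge\ell$, so loop multiplicities are absorbed by $\delta$. For parallel nonloop edges the multiplicity is not dominated: $\beta$ parallel edges on $\{u,v\}$ only give density $\beta/2$. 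This is why $\beta$ must remain in the formula, so the maximum reduces to $\max\{\delta(Q_A),\beta(Q_A)\}$.

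For the algorithmic part I would not rely on knowing which reference column works. Algorithm~\ref{alg:linear-wpc} already scans all reference columns and stops at one satisfying the linear intersection condition. Any column attaining $s(A)\le2$ qualifies, so the algorithm succeeds. Alternatively, one can scan columns directly for one whose normalized supports all have size at most two. Given such a column, $\delta(Q_A)$ and an attaining set come from the closure-based $\mathrm{MaxDensity}$ routine, or from Goldberg-style maximum-density subgraph computation. The WPC values then follow from Theorem~\ref{thm:matrix-density}, and the common optimal partition $\cP_R$ is obtained exactly as in the algorithm.

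The only real subtlety is the bookkeeping between $\mu(Q)$ and $\beta(Q_A)$, namely checking that loops are counted consistently, with one per copy in $\delta$. There is also a mild technicality. The theorem's formula holds for whichever reference column satisfies the linear intersection condition, and different columns could yield different graphs. I would argue this is harmless because every such column gives the same value $\rho(A)$.
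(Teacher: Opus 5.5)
Your proposal is correct and matches the paper's proof. Supports of size at most two automatically satisfy the linear intersection condition, so Theorem~\ref{thm:linear-support} gives $\rho(A)=\max\{\delta(Q_A),\mu(Q_A)\}$; loop multiplicities are absorbed into $\delta(Q_A)$ by taking $X=\{u\}$, and nonloop multiplicities are exactly $\beta(Q_A)$. The algorithmic claim likewise follows from Algorithm~\ref{alg:linear-wpc}, and your remark that it may stop at a different qualifying reference column is harmless, since every such column yields the same $\rho(A)$.
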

\begin{proof}
Any two distinct supports of size at most two intersect in at most one vertex, so the linear intersection condition holds. By Theorem~\ref{thm:linear-support},
\[
\rho(A)=\max\{\delta(Q_A),\mu(Q_A)\}.
\]
The multiplicity of a singleton support $\{u\}$ is the number of loops at $u$. Taking $X=\{u\}$ includes all these loops in $Q_A[X]$, so this multiplicity is at most $\delta(Q_A)$. The multiplicity of a two-element support $\{u,v\}$ is the number of edges joining $u$ and $v$, whose maximum is $\beta(Q_A)$. Hence
\[
\max\{\delta(Q_A),\mu(Q_A)\}
=\max\{\delta(Q_A),\beta(Q_A)\}.
\]
The algorithmic conclusion follows from the exact algorithm under the linear intersection condition.
\end{proof}

\section{Approximation algorithms for full-support split systems}\label{sec:approximation}

This section treats general full-support split systems, without the linear intersection condition. Although exact computation of WPC is NP-hard, for any fixed relative accuracy we can construct an approximately optimal partition in deterministic polynomial time.

For a partition $\cP$ with at least two nonempty blocks, define its fractional and integer scores by
\[
F_G(\cP)=\frac{D_G(\cP)}{|\cP|-1},
\qquad I_G(\cP)=\lfloor F_G(\cP)\rfloor.
\]
Their optimal values are $\FWPC(G)$ and $\WPC(G)$, respectively. The algorithm below returns an explicit partition with simultaneous approximation guarantees for both objectives.

\subsection{A deterministic partition-output PTAS}\label{subsec:ptas}

The idea is to distinguish two cases. If an optimal row set produces few projected column types, enumeration finds an optimal partition exactly. If it produces many types, the optimal WPC value is already large enough that a simple two-block partition achieves the required approximation.

We first explain how to enumerate row sets with few projection types. Choose a reference column and normalize the matrix so that this column becomes zero. Rows with identical full row vectors form one row type, and we record the multiplicity of each type.

\begin{lemma}[Row-type enumeration with bounded projection types]\label{lem:bounded-row-types}
After normalization, any nonempty row set producing at most $t+1$ projected column types contains at most $2^t-1$ distinct row types.
\end{lemma}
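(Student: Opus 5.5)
The plan is to view each selected row, restricted to the column types, as a $\{0,1\}$-valued function on the projected column types, and then count such functions. Let $R$ be a nonempty row set with $c_A(R)\le t+1$, and let $\tau_1,\dots,\tau_c$ be the distinct projected columns $A_{R,v}$, where $c=c_A(R)\le t+1$. Because the matrix is normalized, the reference column projects to the all-zero vector on $R$. So one of these types, say $\tau_1$, is the zero type.

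First, I will show that each row $i\in R$ is constant on every class of columns sharing a projection onto $R$. Indeed, if $A_{R,v}=A_{R,w}$, then in particular $A_{i,v}=A_{i,w}$, since row $i$ is one of the coordinates of the projection. Hence the full row vector of $i$ is determined by the map $f_i\colon\{\tau_1,\dots,\tau_c\}\to\{0,1\}$ that sends $\tau_j$ to the value of row $i$ on any column of class $j$. Concretely, $f_i(\tau_j)$ is the $i$th coordinate of $\tau_j$. Distinct row types in $R$ therefore give distinct maps $f_i$.

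Next, I will constrain the possible maps. Normalization forces $f_i(\tau_1)=0$. Every row is nonconstant, so $f_i$ is not identically zero. Such a map is therefore a nonzero $\{0,1\}$-assignment on the remaining $c-1\le t$ types. There are $2^{c-1}-1\le 2^t-1$ of these, which bounds the number of distinct row types in $R$.

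The only delicate point is to use normalization correctly, since it is what pins down the zero type. Without it, rows and their complements would both be counted, and the bound would double. I would also note explicitly that identical columns outside the reference class cause no trouble: the argument counts functions on projection types, not on columns. I expect no step to be a real obstacle; the argument is a direct counting injection.
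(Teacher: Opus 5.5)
Your proposal is correct and follows essentially the same argument as the paper. Each selected row is determined by its values on the projection classes, and normalization forces the value $0$ on the zero class. Nonconstancy then excludes the all-zero assignment, which leaves at most $2^{c_A(R)-1}-1\le 2^t-1$ distinct row types.
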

\begin{proof}
The zero reference column ensures that zero is a projection type, so there are at most $t$ nonzero types. Group columns with the same projection into classes. Each selected row is constant on every class and is zero on the zero-projection class. Thus the row is determined by its values on the remaining at most $t$ classes.

These values cannot all be zero, since that would make the row constant zero. If there are $d\le t$ nonzero projection types, the values form a nonzero binary vector of length $d$, with at most $2^d-1\le2^t-1$ possibilities. Different full row types give different value vectors, proving the bound.
\end{proof}

Adding further copies of a selected row type preserves the projection-type count and increases the number of selected rows. Thus, when maximizing density, it suffices to retain all copies of a row type or none of them. Together with Lemma~\ref{lem:bounded-row-types}, this allows optimization by enumerating combinations of row types whenever the number of projection types has a fixed constant upper bound.

\begin{theorem}[Deterministic partition-output PTAS]\label{thm:partition-ptas}
For every fixed $\varepsilon>0$, there is a deterministic polynomial-time algorithm that, on a full-support split system, returns a partition $\cP$ with at least two nonempty blocks such that
\begin{align*}
F_{G_A}(\cP)&\le(1+\varepsilon)\FWPC(G_A),\\
I_{G_A}(\cP)&\le(1+\varepsilon)\WPC(G_A).
\end{align*}
With $k=\lceil1/\varepsilon\rceil$, the running time is bounded by
\[
m^{O(2^k)}\poly(m,N).
\]
\end{theorem}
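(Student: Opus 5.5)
The plan is a two-case argument driven by the matrix density formula (Theorem~\ref{thm:matrix-density}) and Lemma~\ref{lem:bounded-row-types}. Set $k=\lceil1/\varepsilon\rceil$, so that $1+1/k\le1+\varepsilon$. Fix any reference column and normalize. Then group the rows into row types with their multiplicities; there are at most $m$ types.

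\textbf{Step 1: enumeration.} Enumerate every nonempty set of at most $2^k-1$ row types; there are $m^{O(2^k)}$ such sets. For each one, take $R$ to be all copies of the chosen types, compute $c_A(R)$, and form $\cP_R$. Also include, as a candidate, the two-block partition $\cP_{\{i\}}$ for an arbitrary single row $i$. Evaluate $F_{G_A}$ on every candidate using Lemma~\ref{lem:partition-score}, and output a candidate minimizing $F_{G_A}$. This candidate also minimizes $I_{G_A}$ among the candidates, because $\lfloor\cdot\rfloor$ is monotone. The running time is $m^{O(2^k)}\poly(m,N)$.

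\textbf{Step 2: the optimum has few projection types.} Let $R^\ast$ be an optimal row set in the definition of $\rho(A)$, closed under taking all copies of its types (which only helps). Suppose $c_A(R^\ast)-1\le k$. By Lemma~\ref{lem:bounded-row-types}, $R^\ast$ uses at most $2^k-1$ row types, so it is enumerated. By Lemma~\ref{lem:partition-density}, $u_A(\cP_{R^\ast})/(|\cP_{R^\ast}|-1)\ge\rho(A)$. Hence $F_{G_A}(\cP_{R^\ast})=m-\rho(A)=\FWPC(G_A)$, and both objectives are solved exactly.

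\textbf{Step 3: the optimum has many projection types.} Otherwise $c_A(R^\ast)-1\ge k+1$, so $\rho(A)\le m/(k+1)$. For the fractional objective, $\FWPC(G_A)\ge mk/(k+1)$. The candidate $\cP_{\{i\}}$ has $u_A\ge1$ and two blocks, so its score satisfies
\[
F\le m-1\le m=(1+1/k)\frac{mk}{k+1}\le(1+\varepsilon)\FWPC(G_A).
\]
For the integer objective,
\[
\WPC(G_A)=m-\lceil\rho(A)\rceil\ge m-\lceil m/(k+1)\rceil=\lfloor mk/(k+1)\rfloor .
\]
Since $mk\bmod(k+1)\le k$, this gives $\WPC(G_A)\ge(mk-k)/(k+1)$. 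Therefore
\[
I_{G_A}(\cP_{\{i\}})\le m-1=(1+1/k)\frac{k(m-1)}{k+1}\le(1+\varepsilon)\WPC(G_A).
\]
In both cases the output candidate is at least as good as the one analyzed, since it minimizes the score over all candidates.

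\textbf{Main obstacle.} The delicate step is the integer guarantee in Step 3. The ceiling in $m-\lceil\rho\rceil$ could cost an additive $1$, and it is absorbed only because the cheap partition scores $m-1$ rather than $m$. This is why the candidate must be a single-row split, not an arbitrary two-block partition. A secondary check is that one output partition works for both objectives simultaneously, which holds because every case bound is proved for that same candidate set.
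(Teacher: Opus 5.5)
Your proposal is correct and follows essentially the same route as the paper: enumerate row-type combinations of size at most $2^k-1$, which recovers an optimal $R^\ast$ exactly when $c_A(R^\ast)-1\le k$ by Lemma~\ref{lem:bounded-row-types}. Otherwise you compare the single-row candidate, of score at most $m-1$, against the lower bounds forced by $\rho(A)\le m/(k+1)$. The differences are cosmetic: you substitute $k+1$ for $t$, and you use $\lfloor mk/(k+1)\rfloor\ge k(m-1)/(k+1)$ in place of the paper's identity $\lceil m/t\rceil=1+\lfloor(m-1)/t\rfloor$, which also lets you skip the separate $m=1$ and $\varepsilon>1$ cases.
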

\begin{proof}
It suffices to consider $0<\varepsilon\le1$; for $\varepsilon>1$, use the algorithm with accuracy parameter one. Set $k=\lceil1/\varepsilon\rceil$.

Choose a reference column and normalize the matrix. Group equal rows into types and record their multiplicities. Enumerate every nonempty combination of at most $2^k-1$ distinct row types, retaining all copies of each selected type to obtain a row set $R$. Construct $\cP_R$ by placing vertices with equal projected columns into the same block, and compute its actual fractional score
\[
F_{G_A}(\cP_R)=m-\frac{u_A(\cP_R)}{|\cP_R|-1}.
\]
When computing $u_A(\cP_R)$, count every row that is constant within each block, including qualifying rows outside $R$.

Return a candidate of minimum fractional score, denoted by $\widehat{\cP}$. Since rounding down is monotone, it also has minimum integer score among the candidates.

The enumeration includes every choice of a single row type with all its copies. Such a choice gives a two-block partition for which at least one row is constant within each block, so its fractional score is at most $m-1$. Since the algorithm returns a minimum-score candidate,
\[
F_{G_A}(\widehat{\cP})\le m-1,
\qquad I_{G_A}(\widehat{\cP})\le m-1.
\]
If $m=1$, this candidate has score zero and is optimal. Henceforth assume $m>1$.

Take a row set $R^*$ of maximum density. Adding missing copies of a selected type preserves the projection-type count and increases the numerator, so $R^*$ contains all copies of every selected type. Let
\[
t=c_A(R^*)-1.
\]
By the proof of Theorem~\ref{thm:matrix-density}, $\cP_{R^*}$ is optimal for both the fractional and integer objectives.

If $t\le k$, Lemma~\ref{lem:bounded-row-types} implies that $R^*$ contains at most $2^k-1$ row types. The algorithm therefore enumerates $R^*$ and returns an exact optimum.

Now suppose $t\ge k+1\ge2$. Since $|R^*|\le m$,
\[
\rho(A)=\frac{|R^*|}{t}\le\frac mt.
\]
The fractional optimum thus satisfies
\[
\FWPC(G_A)=m-\rho(A)\ge\frac{m(t-1)}{t}>0.
\]
Combining this with the candidate upper bound gives
\[
\frac{F_{G_A}(\widehat{\cP})}{\FWPC(G_A)}
\le\frac{m-1}{m(t-1)/t}\le\frac{t}{t-1}.
\]

For the integer objective, let $w=\WPC(G_A)$. The matrix density formula gives
\[
w=m-\lceil\rho(A)\rceil\ge m-\lceil m/t\rceil.
\]
For positive integers $m,t$,
\[
\left\lceil\frac mt\right\rceil
=1+\left\lfloor\frac{m-1}{t}\right\rfloor
\le1+\frac{m-1}{t}.
\]
Therefore,
\[
w\ge\frac{(m-1)(t-1)}{t}>0,
\]
and hence
\[
I_{G_A}(\widehat{\cP})\le m-1\le\frac{t}{t-1}w.
\]
Finally,
\[
\frac{t}{t-1}=1+\frac1{t-1}\le1+\frac1k\le1+\varepsilon.
\]
This proves both approximation guarantees. The case analysis also covers zero optima: the algorithm is exact when $m=1$ or $t\le k$, and both optima are positive in the remaining case.

There are at most $m$ row types, so enumerating combinations of at most $2^k-1$ types gives $m^{O(2^k)}$ candidates. Constructing and evaluating each candidate partition takes polynomial time, establishing the claimed running time.
\end{proof}

\subsection{No FPTAS}\label{subsec:no-fptas}

A fully polynomial-time approximation scheme (FPTAS) additionally requires the running time to be polynomial in both the input size and $1/\varepsilon$. The integer optima of our reduction instances rule out this stronger guarantee unless $\mathrm{P}=\mathrm{NP}$.

For an unweighted hedgegraph with $m$ explicitly listed hedges, each hedge contributes between zero and $|\cP|-1$ to $D_G(\cP)$. Consequently,
\begin{equation}\label{eq:wpc-range}
0\le\WPC(G)\le\FWPC(G)\le m.
\end{equation}

\begin{corollary}[No deterministic FPTAS]\label{cor:no-fptas}
Unless $\mathrm{P}=\mathrm{NP}$, neither integer nor fractional WPC admits a deterministic FPTAS, even on connected full-support split systems. This holds both for algorithms returning a feasible partition and for value-approximation algorithms returning a rational number $z$ with
\[
w\le z\le(1+\varepsilon)w,
\]
where $w$ is the corresponding optimal value.
\end{corollary}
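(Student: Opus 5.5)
The plan is the standard argument that NP-hard problems with polynomially bounded integer optima admit no FPTAS, applied to the reduction instances of Section~\ref{sec:hardness}. Suppose some deterministic algorithm $\mathcal{A}$ returns, for every $\varepsilon>0$ in time polynomial in the input size and $1/\varepsilon$, either a feasible partition or a value $z$ within factor $1+\varepsilon$ of the optimum $w$. First reduce the partition-output version to the value version: given a partition $\cP$, compute $F_{G_A}(\cP)$ or $I_{G_A}(\cP)$ exactly in polynomial time, as in the NP-membership verifier. This gives $z$ with $w\le z\le(1+\varepsilon)w$. So it suffices to rule out value approximation.

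Next, apply $\mathcal{A}$ to $G_{A_H}$ built from an Independent Set instance $(H,k)$. By~\eqref{eq:hard-values}, the integer and fractional optima coincide and equal the integer
\[
w=M-W-\alpha(H)+1,
\]
and by~\eqref{eq:wpc-range} we have $0\le w\le M$. Choose
\[
\varepsilon=\frac{1}{2(M+1)},
\]
so $1/\varepsilon$ is polynomial in $n$, because $M=O(n^3)$. Then
\[
w\le z\le w+\varepsilon w\le w+\tfrac12<w+1.
\]
Since $w$ is an integer, $\lfloor z\rfloor=w$, which determines $\alpha(H)=M-W+1-w$. Comparing with $k$ decides Independent Set. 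The construction has polynomial size, so the whole procedure runs in polynomial time, and hence $\mathrm{P}=\mathrm{NP}$. Connectivity and the full-support split structure of the instances come from the proof of Theorem~\ref{thm:main}.

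The only delicate point is the zero optimum. If $w=0$, the guarantee forces $z=0$, and the rounding argument still applies. We can also note that $w\ge0$ always holds. No step is a real obstacle. The care needed is to check two things: that the precision $\varepsilon$ is inverse-polynomial in the reduction's input size, and that both objectives coincide on these instances, so that one argument covers integer and fractional WPC simultaneously.
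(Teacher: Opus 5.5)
Your proposal is correct and follows essentially the same argument as the paper: you convert partition output to a value by exact evaluation, choose an inverse-polynomial $\varepsilon$ so that $z<w+1$, and round down using the fact that both optima are the integer $M-W-\alpha(H)+1$ on the hard family. The only difference is cosmetic. The paper handles integer WPC on arbitrary instances with $\varepsilon=1/(2m)$ by appealing to Theorem~\ref{thm:main}, and reserves the hard family for the fractional case, whereas you run both objectives directly on $G_{A_H}$ with $\varepsilon=1/(2(M+1))$.
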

\begin{proof}
First consider integer WPC, write $w=\WPC(G)$, and assume $m\ge1$. If an FPTAS existed, run it with
\[
\varepsilon=\frac1{2m}.
\]
Since $0\le w\le m$, the output would satisfy
\[
w\le z\le(1+\varepsilon)w\le w+\frac12.
\]
As $w$ is an integer, $\lfloor z\rfloor=w$, recovering the exact optimum. If the algorithm returns a partition, evaluate its integer score exactly and use it as $z$; the conclusion is the same. If the optimum is zero, the approximation guarantee forces $z=0$, and instances with no hedges can be handled directly.

Since $1/\varepsilon=2m$, the FPTAS still runs in time polynomial in the explicit input size at this accuracy. It would therefore give a deterministic polynomial-time exact algorithm for integer WPC, contradicting Theorem~\ref{thm:main} unless $\mathrm{P}=\mathrm{NP}$.

For fractional WPC, apply the same argument to the hard family constructed in Section~\ref{sec:hardness}. By~\eqref{eq:hard-values}, these instances satisfy
\[
\FWPC(G_{A_H})=\WPC(G_{A_H})=M-W-\alpha(H)+1,
\]
so their fractional optima are also integers. Running a fractional-WPC FPTAS with $\varepsilon=1/(2m)$ and rounding down the returned value would again recover the exact optimum, yielding the same contradiction.
\end{proof}

\section{Conclusions and open questions}\label{sec:conclusion}

We proved that the integer-threshold decision problem for weak partition connectivity is NP-complete, even on connected unweighted full-support split systems. On the same class, exact computation of fractional WPC is also NP-hard, whereas hedge connectivity can be computed exactly in deterministic polynomial time by counting repeated splits. Thus, even when the minimum cut associated with two-block partitions is easy to compute, optimizing WPC over an arbitrary number of blocks can remain hard.

The matrix density formula transforms partition optimization into row selection and supports algorithm design. If some reference column gives normalized row supports satisfying the linear intersection condition, both WPC values and a common optimal partition can be computed in deterministic polynomial time; the case $s(A)\le2$ is a special case. On the entire class of full-support split systems, we further obtained a deterministic PTAS that explicitly returns a partition and simultaneously approximates the integer and fractional objectives. Unless $\mathrm{P}=\mathrm{NP}$, neither objective admits a deterministic FPTAS.

Several questions remain. First, can WPC be computed exactly in polynomial time when $s(A)=3$, or is exact computation NP-hard on this restricted class? Second, can the running-time dependence on the accuracy parameter in our PTAS be improved? Finally, our approximation scheme uses the structure of full-support splits. What deterministic approximation guarantees are possible for general hedgegraphs when an explicit vertex partition is required?

\end{document}